\newtheorem{theorem}{Theorem}
\newtheorem{corollary}{Corollary}
\newdimen\tempor
\def\hbt#1{\hbox{\rm\tiny #1}}
\def\hbf#1{\hbox{\rm\footnotesize #1}}
\def\hb#1{\hbox{\rm #1}}
\def\sottovar(#1)#2{%
      \mathop{\vtop{\ialign{##\crcr%
$\hfil\displaystyle{#2}\hfil$\crcr\noalign{\kern#1\nointerlineskip}
      \hfill\crcr\noalign{\kern#1}}}}\limits}
\def\qmatrix#1{\left[ \matrix{#1} \right]}
\def\C(#1){{\cal #1}}
\def\M(#1){\mathbb #1}
\def\vq{\;,\qquad}
\def\PD(#1,#2){{\rm PD$(#1,#2)$}}
\def\met{{1 \over 2}}
\def\bk(#1,#2){\langle#1|#2\rangle}
\def\bok(#1,#2,#3){\langle#1|#2|#3\rangle}
\def\kb(#1,#2){|#1\rangle\langle#2|}
\def\out#1{|#1\rangle\langle#1|}
\def\Ld{\Lambda}
\def\ket#1{|#1\rangle}
\def\Tr{{\rm Tr}}
\def\diag{\mathop{\rm diag\;}\nolimits}
\def\I{i}
\def\E{\hb{e}}
\def\TT{{\hbt{T}}}
\def\Bas{a_{\displaystyle*}}
\def\inter#1#2{#1\kern-.0667em \lower.5ex\hbox{$#2$}}
\begin{document}

\title{Gaussian States and the Geometrically Uniform Symmetry}
  \author{Gianfranco Cariolaro}
  \author{Roberto Corvaja}
  \author{Gianfranco Pierobon}
\affiliation{Department of Information Engineering,
University of Padova, \\
Via G. Gradenigo 6/B - 35131 Padova, Italy
}


\date{\today}

\begin{abstract}
Quantum Gaussian states can be considered as the majority of the practical quantum states used in quantum communications and more generally in quantum information. Here we consider their properties in relation with the geometrically uniform symmetry, a property of quantum states that greatly simplifies the derivation of the optimal decision by means of the square root measurements. In a general framework of the $N$-mode Gaussian states we show the general properties of this symmetry and the application of the optimal quantum measurements. An application example is presented, to quantum communication systems employing pulse position modulation. We prove that the geometrically uniform symmetry can be applied to the general class of multimode Gaussian states.
\end{abstract} 

\pacs{03.67.Hk}
\keywords{Continuous quantum variables, Gaussian states, geometric uniform symmetry, square root measurement.}

\maketitle

\section{Introduction}

In the last years Gaussian states have received a tremendous interest \cite{Weedbrook}--\cite{milanesi}, due to the fact  that most quantum operations can be performed with continuous variables, of which Gaussian states represent the most important class. The advantage with respect to discrete variables (qubit) is that the optical implementation of continuous variables is available and robust. A relevant application of Gaussian states is given by Quantum Communications (QC), which in practice are implemented by coherent states, the most important subclass of Gaussian states. Often in QC coherent states are not related explicitly to Gaussian states \cite{Kato1999}--\cite{CarPier}, but their recent developments lead to a very elegant and powerful theory, so that it seems important to revisit QC using this theory.

In this context we reconsider QC having as information carrier Gaussian states and we assume that a finite set (constellation)
of Gaussian states has a special form of symmetry, called {\it geometrically uniform symmetry} (GUS). A constellation having the GUS is generated starting from a single reference state through a unitary operator, called {\it symmetry operator}. The GUS has the advantage, not only of simplifying the theory of QC, but also to derive the optimal decision measurements, otherwise not possible. It is worth to remark that the standard form of GUS is enjoyed by almost all constellations considered for practical QC, namely {\it phase shift keying} (PSK) and {\it pulse position modulation} (PPM). Also {\it quadrature amplitude modulation} (QAM) verifies a generalized form of GUS.

Since we want to establish completely general results, valid for multimode Gaussian states, a fundamental preliminary is a clear and compact formulation of the most general Gaussian state and of the most general Gaussian transformation (Gaussian unitary). To this end we follow the theory developed by Ma and Rhodes in a seminal paper published in 1990 \cite{Ma}. This theory has the advantage to handle, through an appropriate algebra of operators, the general $N$-mode Gaussian states in much the same way as the
single--mode Gaussian states. Substantially it proves that the most general Gaussian unitary is given by a cascade combination of a squeezing,
a displacement, and  a rotation. Finally we will prove that, starting from an arbitrary $N$-mode Gaussian state, we can generate a GUS constellation, where the symmetry operator is provided by the rotation operator with an appropriate amount of rotation.

In the literature only QC systems using coherent states have been considered and little attention has been devoted to the appealing possibility of using squeezed states \cite{Slusher}. To this end we revisit QC, where the main problem is the optimization of quantum detection to achieve the minimum error probability. As known, this problem is very difficult and
exact solutions are established only in few cases. To overcome this difficulty, suboptimal solutions are considered,  the most important of which is given by the square-root measurements (SRM), introduced by Hausladen {\it at al.} \cite{Hausladen} and  subsequently developed as least square measurement (LSM) by Eldar and Forney \cite{Forney1} \cite{Eldar3}.
This technique is not in general optimal, but gives a good approximation of the optimum (``pretty good'' is the judgment given by the authors and very often echoed in the literature). The SRM/LSM can be applied with any constellation, but in the presence of GUS it provides the optimal solution in an easy and explicit form. This holds when the detection is based on pure states; with mixed states the SRM technique is suboptimal, but gives a very accurate overestimate of the error probability, surely better than the quantum Chernov bound usually considered with Gaussian states \cite{Corvaja}. 
 
The paper is organized as follows. In Section~\ref{Gauss} we introduce the $N$-mode Gaussian states and Gaussian transformations and discuss their most general representations in terms of unitary Gaussian transformations. Section~\ref{GUS} is devoted to  Gaussian states equipped with the GUS. In Section~\ref{Applications} we present an application of the theory to the quantum detection for PPM, which requires a non trivial analysis in a multimode Hilbert space. 
Explicit examples of error probability are carried out considering as quantum carrier in PPM both coherent and squeezed states. 

We adopt the following notation: $\cdot^\TT$ denotes transposition, while $\cdot^*$ has the multiple role of adjoint for operators, Hermitian conjugation for matrices, and complex conjugation for numbers. 
The {\bf normalization} follows the notation of \cite{Weedbrook}, where in particular the reduced Planck constant is set to $\hbar=2$. 

\section{Gaussian states and Gaussian transformations}\label{Gauss}
\subsection{Definition of Gaussian states}

In an $N$-mode bosonic space $\C(H)^{\otimes N}$ quantum states are represented
in general by density operators 
$\rho:\,\C(H)^{\otimes N}\;\to\;\C(H)^{\otimes N}$. Any density operator
has an equivalent representation in the {\it phase space} $\M(R)^{2N}$
given by a characteristic function and Wigner function. 
In particular, Gaussian states are defined with reference to their characteristic and Wigner functions, which should have a multivariate Gaussian form. For an $N$-mode Gaussian state with mean $\overline{X}$ and covariance matrix $V$ the characteristic function has the following form
\begin{equation}
	\chi(u,v) = \exp\left[-\frac{1}{2}\inter{u}{v}^\TT \Omega V \Omega^\TT \inter{u}{v} -j\,\left(\Omega \overline{X}\right)^\TT \inter{u}{v}\right]\;, \qquad (u,v)\in\mathbb{R}^{2N}
\label{T1}
\end{equation}
with 
\begin{equation}
	\inter{u}{v}= [u_1, v_1, \ldots,u_N, v_N]^T\;,\qquad \Omega=\diag [\Omega_1, \ldots ,\Omega_N]\;,\qquad \Omega_i =\left[\begin{array}{cc} 0 &1 \\ -1 &0 \end{array}\right]
	\label{T2}
\end{equation}
The relevant property of Gaussian states is that they are specified simply by the pair $(\overline{X},V)$ and for this reason the density operator of a Gaussian state is often indicated in the form $\rho(\overline{X},V)$.

\subsection{Gaussian transformations}

A {\it quantum transformation} or {\it quantum operation} maps the state of the system $\rho$ into a new state $\tilde\rho=\Phi(\rho)$.
In general  a quantum transformation defines a quantum channel, which may refer to an open system \cite{HolevoGiovannetti}, while 
in closed quantum systems the map is provided by a unitary transformation according to
\begin{equation}
	\rho\,\to\,\tilde \rho=U\,\rho\,U^*\;. \label{T40}
\end{equation}
A quantum transformation is {\bf Gaussian} when {\it it transforms Gaussian states into Gaussian states}.
When the Gaussian  transformation is performed according to the unitary map (\ref{T40})
 it is called {\bf Gaussian unitary}. It can be shown \cite{Weedbrook} that Gaussian unitaries are generated
in the form $U=\exp(-\I H/2)$, where $H$ is a Hamiltonian, which
is a second--order polynomial in the field operators $\inter qp:=[q_1,p_1,\ldots,q_N,p_N]^\TT$ or, equivalently, in the bosonic
operators $a:=[a_1,\ldots,a_N]^\TT$, $\Bas:=[a^*_1,\ldots,a^*_N]^\TT$. 
 
In terms of quadrature operators $\inter qp$, a Gaussian unitary gives a {\bf symplectic transformation}, which has the form
\begin{equation} 
	\inter qp\to S\,\inter qp+ d \label{T42}
\end{equation}
where $S$ is a $2N\times2N$ real matrix  and $d\in\M(R)^{2N}$. $S$ has the property (symplectic matrix) $S\,\Omega\,S^{\hbt{T}}=\Omega$.

A symplectic  transformation  modifies the mean vector $\overline X=\overline{\inter qp}$ and the covariance matrix $V$ in the form
\begin{equation}
    \overline{X}\;\to \;  S\;  \overline{X}+d\vq V\;\to\;S\,V\,S^{\TT}\;. \label{T60}
\end{equation}
These are the key results because they allow us to specify a Gaussian transformations in terms of the parameters $(S,d)$, which ``live'' in the phase space $\M(R)^{2N}$. 
 
\subsection{Fundamental Gaussian unitaries}

In the literature we find a plethora of forms for the Gaussian unitaries with specific expressions in the single-mode, in the two-mode and in the $N$-mode. Here we follow the unified form developed by Ma and Rhodes \cite{Ma} for the $N$-mode.
This form, using appropriate matrix notations, turns out to have
extremely similar algebraic properties as that of the single mode and is very useful to establish general results.

There are only three fundamental Gaussian unitaries, which are  specified by the following unitary operators: 
\begin{enumerate}
\item $N$-mode displacement operator
\begin{equation}
    D(\alpha):=\E^{a^* \alpha\,\,-\alpha^*\, a} 
    \vq \qquad\alpha=[\alpha_1,\ldots,a_N]^\TT \in \M(C)^N\;. \label{U22}
\end{equation}
\item $N$-mode rotation operator
\begin{equation}
   R(\phi):=\E^{\I\, a^*\phi \, a}\vq \qquad\phi \quad\hb{$N\times N$ Hermitian matrix}\;. \label{U24}
\end{equation}
\item $N$-mode squeeze operator
\begin{equation}
   S(z):=\E^{\met\left[\,( a^* \,z\,{ \Bas}-a^\TT\,z^*\,a)\right]} 
   \vq z\quad \hb{$N\times N$  symmetric  matrix}\;. \label{U26}
\end{equation}
\end{enumerate} 
In these definitions $a=[a_1,\ldots,a_N]^\TT$ and $\Bas=[a_1^*,\ldots,a_N^*]^\TT$ are column vectors, while $a^*=[a_1^*,\ldots,a_N^*]$ is a row vector. In (\ref{U24}) the $N\times N$ symmetric matrix $z$ can always be written in the forms \cite{Ma} $z=r\,\E^{\I \theta}=\E^{\I \theta^\TT}\,r^\TT$, 
where $r$ and $\theta$  are $N\times N$ Hermitian (in general non commuting) matrices and $r$ is positive semidefinite. 
A particular case of (\ref{U24}) with $N=2$ gives the beam splitter, while the most popular forms of squeezing (single mode and two-mode) are particular cases of (\ref{U26}) for $N=1, 2$.

Note that the above fundamental unitaries are special cases of the general Gaussian unitary $U=\E^{-\I H/2}$ with $H$ a Hamiltonian, quadratic in the creation and annihilation operators collected in $a_*$ and $a$.
As we shall see below, all Gaussian transformations are obtained as combination of these operators, and the corresponding
Gaussian states are typically generated starting from replicas of vacuum states or of coherent states.

We are particularly interested in the cascade combination, where one can switch the order of operators with an appropriate change in the parameters \cite{Ma}
\begin{eqnarray}
  &&D(\alpha)\,S(z)=S(z)\,D(\beta)\!\!\!\!\quad\quad           \beta=\cosh(r)\,\alpha- \sinh(r)\, \E^{\I\theta}\,\alpha_* \\ \label{SRa}
  &&S(z)\,R(\phi)=R(\phi)\,S(z_0)\!\!\!\!\quad\;\; \;\,\,   z_0=\E^{-\I \phi}z\, \E^{-\I \phi^{\tt T}} \\ \label{SRb}
  &&D(\alpha)\,R(\phi)=R(\phi)\,D(\beta)\!\!\!\!\quad\;\,\, \beta= \alpha\,\E^{-\I \phi}\;.\label{SRc}
\end{eqnarray}

\subsection{The most general Gaussian unitary}

The importance of the fundamental unitaries lies in the following:
\begin{theorem}\label{P5} 
The most general Gaussian unitary is given by the cascade combination of the three fundamental Gaussian unitaries:
$S(z)$, $D(\alpha)$ and $R(\phi)$, cascaded in any arbitrary order by a proper adjustments of the parameters.
\end{theorem}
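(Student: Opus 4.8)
The plan is to show that an arbitrary Gaussian unitary $U=\E^{-\I H/2}$ can be decomposed, and then invoke the commutation identities (\ref{SRa})--(\ref{SRc}) to reorder the three fundamental pieces. First I would work at the level of the symplectic action on phase space. By (\ref{T42}) and (\ref{T60}), $U$ induces the affine map $\inter qp\mapsto S\,\inter qp+d$ with $S$ symplectic. The displacement part is easy to strip off: choosing $\alpha$ so that the induced shift of $D(\alpha)$ equals $d$ (up to the linear action of $S$), it suffices to realize every \emph{homogeneous} symplectic $S$ (i.e.\ $S\Omega S^\TT=\Omega$, $d=0$) as a product of a squeezing $S(z)$ and a rotation $R(\phi)$; the displacement can then be shuffled to either end by (\ref{SRa}) and (\ref{SRc}).

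The core step is therefore a polar-type (Euler) decomposition of the real symplectic group. I would pass to the bosonic picture, where a homogeneous Gaussian unitary acts on $(a,\Bas)$ by a Bogoliubov transformation
\begin{equation}
a \;\to\; \mu\, a + \nu\, \Bas\;,\qquad \mu\mu^* - \nu\nu^* = I\;,\qquad \mu\nu^\TT = \nu\mu^\TT\;.
\label{bogo}
\end{equation}
The matrix $\mu$, being invertible by the first constraint, admits a (left) polar decomposition $\mu = P\,W$ with $P=(\mu\mu^*)^{1/2}$ positive definite and $W$ unitary; then $P=\cosh r$ for a unique Hermitian $r\ge 0$. Feeding this into the $\mu\nu^\TT$ symmetry constraint and using Ma and Rhodes' parametrization $z=r\,\E^{\I\theta}$ of the squeezing matrix, one identifies $\nu$ with $\sinh(r)\,\E^{\I\theta}\,\overline{W}$ (Schur/Autonne--Takagi applied to the symmetric piece fixes $\theta$), so that $U$ differs from $S(z)R(\phi)$, with $\E^{\I\phi}=W$, only by a possible global phase. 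Concretely this is the statement that the map $(z,\phi)\mapsto S(z)R(\phi)$ surjects onto homogeneous Gaussian unitaries — exactly the $N$-mode analogue of the familiar single-mode fact that any $SU(1,1)$ element is a squeeze followed by a rotation.

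The main obstacle is the noncommutativity of the $N\times N$ matrices $r$ and $\theta$: unlike the single-mode case one cannot freely rearrange scalar hyperbolic functions, and one must be careful that the Takagi-type factorization producing $\theta$ is compatible with the $r$ coming from the polar decomposition of $\mu$. This is precisely where the two decompositions $z=r\,\E^{\I\theta}=\E^{\I\theta^\TT}r^\TT$ quoted from \cite{Ma} do the work, guaranteeing the parametrization is consistent. Once $U=R(\phi)\,S(z_0)\,D(\beta)$ (or any other ordering) is established for homogeneous $S$, restoring the displacement and reordering via (\ref{SRa})--(\ref{SRc}) is routine, giving the claimed decomposition in an arbitrary prescribed order.
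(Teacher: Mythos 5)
Your proposal is correct in outline but follows a genuinely different route from the paper. The paper does not give a self-contained proof: it simply invokes the Lie-algebraic (Baker--Campbell--Hausdorff disentangling) result of Ma and Rhodes, who show directly at the operator level that $\E^{-\I H/2}$ with $H$ a quadratic Hamiltonian equals $\E^{\I\gamma}\,S(z)\,D(\alpha)\,R(\phi)$, and then appeals to the switching rules (\ref{SRa})--(\ref{SRc}) to permute the factors. You instead argue in phase space: after stripping the displacement, the claim reduces to the polar decomposition of the real symplectic group, $S=P\,O$ with $P=(SS^\TT)^{1/2}$ positive-definite symplectic (realized by a squeeze $S(z)$) and $O$ orthogonal symplectic (realized by a rotation $R(\phi)$), which you phrase in the Bogoliubov picture via the polar factorization $\mu=\cosh(r)\,W$. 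This buys an explicit, constructive recipe for $(z,\phi,\alpha)$ in terms of the symplectic data $(S,d)$, whereas the Lie-algebra route controls the operator identity (including the global phase $\gamma$) directly. Two points should be made explicit to close your argument: first, you tacitly use the standard fact that two Gaussian unitaries inducing the same affine symplectic map differ only by a phase (the metaplectic representation is projective); without it, matching the Bogoliubov matrices of $U$ and of $S(z)R(\phi)$ does not yet identify the operators. Second, the verification that $\nu W^\TT$ has the form $\sinh(r)\,\E^{\I\theta}$ with $z=r\,\E^{\I\theta}$ symmetric must be derived by combining the constraint $\mu\nu^\TT=\nu\mu^\TT$ with the polar factor $\cosh(r)$; you correctly flag this as the place where the identity $z=r\,\E^{\I\theta}=\E^{\I\theta^\TT}r^\TT$ of Ma and Rhodes does the work, but the step is asserted rather than carried out. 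With those two items supplied, the proof is complete and arguably more informative than the paper's citation-based argument.
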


The proof can be obtained for the general multimode using the Lie algebra: Ma and Rhodes \cite{Ma}, generalizing a previous result obtained for the single mode \cite{Shumaker86}--\cite{Ma89}, proved that a unitary operator $\E^{-\I H/2}$, where $H$ is a general $N$-mode quadratic Hamiltonian, can be written in the form
\begin{equation}
	U=\E^{\I\,\gamma}\,S(z)\,D(\alpha)\,R(\phi)\label{K16}
\end{equation}
where the phasor $\E^{\I\,\gamma}$ with $\gamma\in{\mathbb R}$ is irrelevant for the state generation. On the other hand we can apply the switching rules (\ref{SRa}--\ref{SRc})
to change the order of the fundamental unitaries in (\ref{K16}), with appropriate modifications of the parameters. 

In the phase space a Gaussian unitary is equivalent to a symplectic map (\ref{T42}), specified by the pair $(S,d)$. The Gaussian unitary  can always be written in the form   $U_{S,d}= D(\alpha) U_S$, where $U_S$ corresponds to the map $\inter qp\to S\inter qp$ and the displacement operator $D(\alpha)$ in the phase space provides the displacement $\inter qp\to \inter qp+d$, with $d_{2i-1}=\Re\,\alpha_i$ and $d_{2i}=\Im\,\alpha_i$, 
 
\subsection{The most general Gaussian state}
    
The most general Gaussian state can be derived by combination of the {\it thermal decomposition} and Theorem~\ref{P5}.
 
In a Gaussian state with the pair $(V,\overline X)$, the covariance matrix $V$ and the mean vector $\overline X$ can be handled separately. The covariance matrix $V$ is fully described by powerful Williamson's theorem, which states that an $N$-mode covariance matrix $V$ can be decomposed in the form
\begin{equation}
\fbox{$\displaystyle V= S_w\,V^\oplus\,S_w^\TT 
 \vq\quad V^\oplus=\diag[\sigma^2_1,\sigma^2_1,\ldots,\sigma^2_N,\sigma^2_N]$} \label{W2}
\end{equation}
where $S_w$ is a $2N\times 2N$ symplectic matrix and the $\sigma^2_i$ are positive real values, called the {\bf symplectic eigenvalues} of $V$.

Application of  Williamson's theorem gives the so called {\it thermal decomposition} of a Gaussian state \cite{milanesi}, that is, an arbitrary $N$-mode zero--mean Gaussian state can be generated by the {\bf tensor product of $N$ single-mode thermal states}, with covariance matrix $V_k=\sigma^2_k\,I_2$, where $I_2$ is the $2\times 2$ identity, and number of thermal photons $\C(N)_k=\met(\sigma^2_k-1)$.
In fact, a single--mode thermal state is specified by a covariance matrix $\sigma^2I_2$, with average photon number $\C(N)=\met(\sigma^2-1)$. Now, according to  (\ref{W2}), the $N$-mode Gaussian state with the diagonal covariance matrix $ V^\oplus$, is given by
\begin{equation}
   \rho_{\hbf{th}}(0,V^\oplus)=
      \rho(0,\sigma_1^2\,I_2)\otimes\cdots\otimes
      \rho(0,\sigma_N^2\,I_2)\;.\label{W8}
\end{equation}
However, by Theorem~\ref{P5}, we know that $U_S$ can be written as the cascade combination $U_S=S(z)\,R(\phi)$ or $U_S=R(\phi)\,S(z)$.
Then, a zero--mean Gaussian state with covariance matrix $V$ is generated from $\rho_{\hbf{th}}(0,V^\oplus)$ in the form
$\rho(0,V)= U_S\,\rho_{\hbf{th}}(0,V^\oplus)\,U_S^*$, 
where $U_S$ is the unitary operator corresponding to the symplectic transformation $S_w$ of decomposition (\ref{W2}). 

A Gaussian state with {\bf non--zero mean} is generated by introducing in $\rho(d,V)$ an appropriate displacement operator. 
In conclusion, by combination of the previous statements:
\begin{theorem}\label{H4}
The most general $N$-mode Gaussian state is generated
from thermal state (\ref{W8}) by application of the three fundamental unitaries
as
\begin{equation}
	\rho(d,V)= D(\alpha)\,R(\phi)\,S(z)\,\rho_{\hbf{th}}(0,V^\oplus) \,S^*(z)\,R^*(\phi)\,D^*(\alpha) \label{W12}
\end{equation}
where the order $D(\alpha)\,R(\phi)\,S(z)$ can be permuted according to (\ref{SRa})--(\ref{SRc}).
\end{theorem}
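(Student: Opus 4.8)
The plan is to build the claimed factorization out of the structural results already assembled in this section: Williamson's theorem~(\ref{W2}), the thermal decomposition it induces, the Ma--Rhodes normal form (Theorem~\ref{P5}), and the transformation law~(\ref{T60}) for the pair $(\overline X,V)$ under a symplectic map. The organizing principle is that $\overline X$ and $V$ can be handled independently: $V$ through a combination of a squeezing and a rotation acting on a thermal seed, and $\overline X$ through a final displacement that does not disturb $V$.

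First I would settle the zero-mean case. Given an arbitrary $N$-mode Gaussian state $\rho(0,V)$, Williamson's theorem yields $V=S_w\,V^\oplus\,S_w^\TT$ with $S_w$ symplectic and $V^\oplus=\diag[\sigma_1^2,\sigma_1^2,\ldots,\sigma_N^2,\sigma_N^2]$. Inserting $V^\oplus$ and zero mean into the characteristic function~(\ref{T1}) and using the block structure of $\Omega$, which makes $\Omega\,V^\oplus\,\Omega^\TT$ block diagonal, shows that $\chi$ factors into a product of single-mode Gaussian characteristic functions; hence $\rho_{\hbf{th}}(0,V^\oplus)$ is exactly the tensor product~(\ref{W8}) of single-mode thermal states with $\C(N)_k=\met(\sigma_k^2-1)$ photons. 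Now let $U_S$ be the Gaussian unitary implementing the symplectic map $\inter qp\to S_w\inter qp$ with no translation. By~(\ref{T60}) the conjugation $U_S\,\rho_{\hbf{th}}(0,V^\oplus)\,U_S^*$ leaves the mean at $0$ and sends $V^\oplus\mapsto S_w\,V^\oplus\,S_w^\TT=V$; since a Gaussian state is determined by the pair $(\overline X,V)$, this forces $\rho(0,V)=U_S\,\rho_{\hbf{th}}(0,V^\oplus)\,U_S^*$. Because $U_S$ has a trivial displacement component, Theorem~\ref{P5} together with the switching rules~(\ref{SRa})--(\ref{SRc}) lets me write $U_S=R(\phi)\,S(z)$ for suitable $\phi$ and $z$.

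Next I would reinstate a general mean. If the target state is $\rho(d,V)$, choose $\alpha\in\M(C)^N$ with $d_{2i-1}=\Re\,\alpha_i$ and $d_{2i}=\Im\,\alpha_i$, so that $D(\alpha)$ realizes in phase space the pure translation by $d$. Applying $D(\alpha)$ after $U_S$ and using~(\ref{T60}) once more---a displacement alters only the mean, not the covariance---gives
\[
\rho(d,V)=D(\alpha)\,R(\phi)\,S(z)\,\rho_{\hbf{th}}(0,V^\oplus)\,S^*(z)\,R^*(\phi)\,D^*(\alpha),
\]
which is precisely~(\ref{W12}). The freedom to permute the order of $D(\alpha)$, $R(\phi)$, $S(z)$ is then an immediate consequence of the switching identities~(\ref{SRa})--(\ref{SRc}), which reorder any adjacent pair of fundamental unitaries at the cost of redefining their parameters.

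The genuinely load-bearing points, which I would state explicitly rather than slide past, are the two passages between the phase-space picture and the operator picture: that every symplectic matrix---in particular the $S_w$ produced by Williamson---is implemented by \emph{some} Gaussian unitary, and that two Gaussian states sharing the same $(\overline X,V)$ coincide. Both are standard (the first is part of the Ma--Rhodes analysis underlying Theorem~\ref{P5}), and once they are granted the remainder is bookkeeping: checking that thermal states are Gaussian with the asserted parameters, and verifying that the cascade of a squeeze, a rotation, and a displacement, applied in the written order, produces exactly $(d,V)$ rather than some conjugated variant---which is just what~(\ref{T60}) guarantees.
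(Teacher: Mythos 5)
Your proposal is correct and follows essentially the same route as the paper: Williamson's theorem to reduce $V$ to the thermal covariance $V^\oplus$, the unitary $U_S$ implementing $S_w$ decomposed as $R(\phi)\,S(z)$ via Theorem~\ref{P5}, and a final displacement $D(\alpha)$ to set the mean, with the reordering freedom supplied by (\ref{SRa})--(\ref{SRc}). The only difference is that you make explicit two facts the paper leaves tacit (that every symplectic matrix is realized by a Gaussian unitary, and that a Gaussian state is determined by $(\overline X,V)$), which is a welcome clarification rather than a change of method.
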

   
For the particular case of pure states, the thermal decomposition degenerates into the product of $N$ replicas of the vacuum state, say $\ket{0_N}$, and then $\rho(d,V)=\out{\psi(d,V)}$, with $\ket{\psi(d,V)}=D(\alpha)\,R(\phi)\,S(z)\,\ket{0_N}$.
But we can invert the order of squeezing and rotation with the rules (\ref{SRa})--(\ref{SRc}) and after the change, $R(\phi)\ket{0_N}=\ket{0_N}$. In conclusion:
\begin{corollary}\label{H5} The most general $N$-mode pure Gaussian state is obtained from the $N$ replica of the vacuum $\ket{0_N}$ as
\begin{equation}
	\ket{\psi(d,V)}=D(\alpha)\,S(z)\,\ket{0_N}\, :=\ket{z, \alpha}\;. \label{W14} 
\end{equation}
 \end{corollary}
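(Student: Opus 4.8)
The plan is to deduce the corollary from Theorem~\ref{H4} by specializing to pure states and then discarding the rotation. First I would invoke the observation, already recalled before the statement, that a \emph{pure} Gaussian state has all symplectic eigenvalues in Williamson's decomposition (\ref{W2}) equal to one, so that $V^\oplus=I_{2N}$, every $\C(N)_k$ vanishes, and the thermal block (\ref{W8}) degenerates to the $N$-fold vacuum $\out{0_N}$. Substituting this into (\ref{W12}) gives $\rho(d,V)=\out{\psi(d,V)}$ with the pure-state generator $\ket{\psi(d,V)}=D(\alpha)\,R(\phi)\,S(z)\,\ket{0_N}$, where the triple $(\alpha,\phi,z)$ is exactly the one produced by Theorem~\ref{H4} from the pair $(d,V)$ (with $\alpha$ encoding $d$ through $d_{2i-1}=\Re\,\alpha_i$, $d_{2i}=\Im\,\alpha_i$, and $(z,\phi)$ coming from the Williamson symplectic $S_w$ of $V$).

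Next I would push the rotation operator all the way to the right using the switching rule (\ref{SRb}). Reading $S(w)\,R(\phi)=R(\phi)\,S(w_0)$, $w_0=\E^{-\I\phi}\,w\,\E^{-\I\phi^\TT}$, and choosing $w$ so that $w_0=z$, i.e.\ $w=\E^{\I\phi}\,z\,\E^{\I\phi^\TT}$, yields $R(\phi)\,S(z)=S(\tilde z)\,R(\phi)$ with $\tilde z:=\E^{\I\phi}\,z\,\E^{\I\phi^\TT}$, which is again symmetric and hence a legitimate squeezing parameter. Thus $\ket{\psi(d,V)}=D(\alpha)\,S(\tilde z)\,R(\phi)\,\ket{0_N}$, and because $a\,\ket{0_N}=0$ the quadratic generator $\I\,a^*\phi\,a$ of $R(\phi)$ annihilates the vacuum, so $R(\phi)\,\ket{0_N}=\ket{0_N}$. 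Hence $\ket{\psi(d,V)}=D(\alpha)\,S(\tilde z)\,\ket{0_N}$, which is the asserted form after renaming $\tilde z\to z$. Conversely, any $D(\alpha)\,S(z)\,\ket{0_N}$ is a product of fundamental Gaussian unitaries acting on the vacuum, hence a pure Gaussian state, so the family $\{\ket{z,\alpha}\}$ is precisely the set of pure $N$-mode Gaussian states.

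I expect the reordering itself to be routine — it is a one-line use of (\ref{SRb}) — and the genuinely substantive inputs to be (i) the purity criterion through the symplectic spectrum, namely that $\Tr\rho^2$, which in the adopted normalization equals $1/\sqrt{\det V}=\prod_k\sigma_k^{-2}$, is $1$ exactly when every $\sigma_k=1$, and (ii) keeping the parameter bookkeeping honest, since the new squeezing matrix $\tilde z$ silently absorbs the rotation inherited from $S_w$ while the displacement is left untouched. The conceptual heart is simply that $R(\phi)$ fixes the reference state, which is precisely what lets the rotation drop out of the general expression of Theorem~\ref{H4} in the pure case.
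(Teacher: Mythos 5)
Your argument is correct and follows the paper's own route exactly: specialize Theorem~\ref{H4} to pure states (where the thermal factor degenerates to the $N$-fold vacuum), commute the rotation past the squeezing via the switching rule (\ref{SRb}), and let $R(\phi)$ be absorbed by $\ket{0_N}$. The extra details you supply --- the purity criterion through the symplectic spectrum and the explicit symmetric $\tilde z=\E^{\I\phi}z\,\E^{\I\phi^\TT}$ --- are correct refinements of steps the paper leaves implicit.
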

In words, the most general $N$-mode Gaussian pure state is a squeezed--displaced state or a displaced--squeezed state.

\section{The geometrically uniform symmetry  (GUS) with Gaussian states}\label{GUS}

The context of GUS is provided by {\it QC systems} where the transmission of classic information uses quantum states as physical carriers. A classical source emits a symbol $A$
belonging to a set of $K$ elements, $A\in\C(A)=\{0,1,\ldots,K-1\}$, with assigned {\it a priori} probabilities
$q_i=P[A=i]$, $ i\in\C(A)$.
  The transmitter (Alice) encodes the symbol $A$ into a quantum state $|\gamma_A\rangle$ in a {\it constellation} of $K$ pure states
$\left\{|\gamma_{0}\rangle,\ldots, |\gamma_{K-1}\rangle\right\}$, and more generally, in a constellation of mixed states with 
density operators $\{\rho_{0},\ldots,\rho_{K-1}\}$.

The receiver (Bob) performs a quantum measurement from the received state $\rho_A$ with POVM measurement operators $\{P_k, k\in\C(A)\}$. 
On the basis of the measurement Bob estimates the state sent by Alice. The {\it correct decision probability} is \cite{Helstrom2}
\begin{equation}
	P_c=\sum_{i\in\C(A)}q_i\Tr(\rho_iP_i)\;. \label{B42}
\end{equation}
The choice of the measurement operators maximizing $P_c$ is a (generally difficult) key problem in QC.
In particular, if Alice uses pure states, i.e., $\rho_i=\out{\gamma_i}$, according to Kennedy's theorem \cite{Kennedy}, the optimal POVM have rank one, $P_i=\out{\mu_i}$, where the $\mu_i$ are called {\it measurement 
 vectors} and (\ref{B42}) becomes
\begin{equation}
	P_c=\sum_{i\in\C(A)}q_i\left|\bk({\mu_i},{\gamma_i})\right|^2\;. \label{B43}
\end{equation}
	 
\subsection{Definition of GUS}

Since the beginning of QC \cite{Helstrom2}\cite{Yuen} particular attention has been paid to constellations enjoying a high degree of symmetry with uniform {\it a priori} probabilities. The interest of this case resides both in the fact that it corresponds to many practical situations and that the optimal measurements is easy to obtain \cite{Kato1999}--\cite{CarPier}\cite{Forney1}\cite{Ban1997}.
We now define the GUS for pure states and we assume equiprobable symbols, $q_i=1/K$, but the definition can be extended to generic a priori probabilities $q_i$ substituting the states with the weighted states $\sqrt{q_i}|\gamma_i\rangle$ or  $q_i\,\rho_i$ (see \cite{Forney1}).

A constellation of $K$ pure states $
  \left\{|\gamma_0\rangle,|\gamma_1\rangle,\ldots,|\gamma_{K-1}\rangle\right\}$
has the {\it geometrically uniform symmetry} when the two properties are verified:
 1) there exists a unitary operator $Q$ with the property $Q^{K}=I_{\C(H)^{\otimes N}}$, where $I_{\C(H)^{\otimes N}}$ is the identity operator of $\C(H)^{\otimes N}$, and 
 2) the $K$ states $|\gamma_i\rangle$ are obtained from
 a single reference state $|\gamma_0\rangle$ in the following way
\begin{equation}
	|\gamma_i\rangle=Q^{i}|\gamma_0\rangle\vq i=0,1,\ldots,K-1\;. \label{C36}
\end{equation}
The operator $Q$, which is given by a $K$-th root of the identity operator, is called {\it symmetry operator}.
Thus, in the presence of the GUS, the specification of the constellation is limited to  the symmetry operator $Q$ 
and to the reference state $|\gamma_0\rangle$. In addition, it simplifies the quantum decision, because we can choose the POVMs of the form $P_i=\out{\mu_i}$, where the $\mu_i$  (measurement vectors) have the same symmetry as the states, that is, $|\mu_i\rangle=Q^{i}|\mu_0\rangle$, $i=0,1,\ldots,K-1$.

\subsection{The GUS with Gaussian states }
 
We now investigate the possibility that a constellation of Gaussian states have the GUS. 
Let $\C(S)=\{\ket{\psi(p)}, p\in {\cal P}\}$ be a class of pure quantum states, dependent on a parameter $p$. The class is {\bf closed with respect to rotations} if $R(\phi)\ket{\psi}\in\C(S)$, where $ R(\phi)$ is the rotation operator.
With such a class we can construct constellations of any order $K$ with the GUS property. In practice in the single mode we get a $K$--ary PSK constellations, by choosing an arbitrary reference state $\ket{\psi_0}$ in $\C(S)$  and using as symmetry operator $Q=R(2\pi/K)$. In the multimode a relevant application is given by the PPM (see below). 
We know that the new state is still Gaussian, but we want to find the new parameters $\alpha$ and $z$ determined by the rotation.
We apply relation (\ref{SRc}) to get $R(\phi)\,D(\alpha)=D(\E^{\I\,\phi}\,\alpha)\,R(\phi)$.
Next we apply (\ref{SRb}) to get
$R(\phi)\,S(z)=S(\E^{\I\,\phi}\,z\,\E^{\I\,\phi^\TT}) \,R(\phi)$.
Hence 
\begin{equation}
    \ket{\psi(z,\alpha,\phi)}=D(\E^{\I\,\phi}\,\alpha)\;S(\E^{\I\,\phi}z\,\E^{\I\,\phi^\TT}) \;R(\phi)\;\ket{0_N}:=\ket{z,\alpha,\phi}\;.\label{FT1}
\end{equation}
But $R(\phi)\ket{0_N}=\ket{0_N}$, so that the rotation can be dropped. In conclusion the rotation
modifies the parameters in the form
\begin{equation}
     z=\;\to\;\E^{\I \phi }z\,\E^{\I \phi}\vq
        \alpha\;\to\;\alpha\,\E^{\I \phi}\;, \label{FT3}
\end{equation}
and we have:
\begin{theorem}\label{PP2} 
The class of pure displaced--squeezed states is closed under rotations. A {\emph rotated--displaced--squeezed state}
can be obtained from a {\emph displaced--squeezed state} by modification of the squeeze factor and of the displacement amount as
\begin{equation}
\framebox{$\displaystyle
      \ket{z,\alpha,\phi} = \ket{\E^{\I \phi }z\,\E^{\I \phi^\TT},\E^{\I \phi}\,\alpha}\;.
         $} \label{R32}
\end{equation}
\end{theorem}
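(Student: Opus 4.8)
The plan is to push the rotation operator $R(\phi)$ appearing in $\ket{\psi(z,\alpha,\phi)}:=R(\phi)\ket{z,\alpha}$ rightward, past the displacement and the squeeze operators, until it reaches the vacuum, where it acts as the identity. Using the representation $\ket{z,\alpha}=D(\alpha)\,S(z)\,\ket{0_N}$ of Corollary~\ref{H5}, the rotated state is $R(\phi)\,D(\alpha)\,S(z)\,\ket{0_N}$, and the aim is to bring this into the canonical form $D(\alpha')\,S(z')\,\ket{0_N}$ for a suitable $\alpha'\in\M(C)^N$ and a symmetric $z'$; this simultaneously establishes closure of the class and identifies the new parameters.

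The core of the argument is to read the switching rules (\ref{SRb})--(\ref{SRc}) in the direction opposite to how they are stated, i.e.\ to move $R(\phi)$ from the left to the right of $D(\alpha)$ and of $S(z)$. This yields $R(\phi)\,D(\alpha)=D(\E^{\I\phi}\alpha)\,R(\phi)$ and $R(\phi)\,S(z)=S(\E^{\I\phi}z\,\E^{\I\phi^\TT})\,R(\phi)$, the two relations already written down just before the theorem. Applying them in succession to $R(\phi)\,D(\alpha)\,S(z)\,\ket{0_N}$ gives $D(\E^{\I\phi}\alpha)\,S(\E^{\I\phi}z\,\E^{\I\phi^\TT})\,R(\phi)\,\ket{0_N}$. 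It then remains only to note that $R(\phi)\ket{0_N}=\ket{0_N}$: since $R(\phi)=\E^{\I\,a^*\phi\,a}$ and $a\,\ket{0_N}=0$, every power of the exponent annihilates $\ket{0_N}$, so the rotation fixes the vacuum. Therefore $\ket{z,\alpha,\phi}=D(\E^{\I\phi}\alpha)\,S(\E^{\I\phi}z\,\E^{\I\phi^\TT})\,\ket{0_N}=\ket{\E^{\I\phi}z\,\E^{\I\phi^\TT},\,\E^{\I\phi}\alpha}$ in the notation of Corollary~\ref{H5}, which is precisely (\ref{R32}).

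Finally I would confirm that $(\E^{\I\phi}z\,\E^{\I\phi^\TT},\,\E^{\I\phi}\alpha)$ is an admissible parameter pair, so that the transformed state is again a displaced--squeezed state: $\E^{\I\phi}\alpha\in\M(C)^N$ is clear, and $\E^{\I\phi}z\,\E^{\I\phi^\TT}$ is symmetric whenever $z$ is, since $(\E^{\I\phi}z\,\E^{\I\phi^\TT})^\TT=\E^{\I\phi}z^\TT\E^{\I\phi^\TT}$. One may additionally note, via (\ref{SRa}), that starting from the squeezed--displaced ordering of Corollary~\ref{H5} leads to the same conclusion, so the statement is order-independent as asserted. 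I do not expect a genuine obstacle here: the only delicate point is the bookkeeping in the multimode conjugations --- the sign of $\phi$ and the distinction between $\phi$ and $\phi^\TT$ in transforming $z$ --- while the familiar single-mode identity $z\to\E^{2\I\phi}z$ emerges as the special case in which $\phi$ is a scalar and $\E^{\I\phi^\TT}=\E^{\I\phi}$.
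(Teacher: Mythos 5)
Your proposal is correct and follows essentially the same route as the paper: commute $R(\phi)$ past $D(\alpha)$ and $S(z)$ via the switching rules and then absorb it into the vacuum through $R(\phi)\ket{0_N}=\ket{0_N}$. The extra check that $\E^{\I\phi}z\,\E^{\I\phi^\TT}$ remains symmetric is a sensible addition the paper leaves implicit.
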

With reference to the class $\C(S)=\{\ket{\psi(p)},p\in P\}$ the statement of Theorem~\ref{PP2} can be formulated as follows. The class $\C(S)$ becomes explicitly the class of squeezed displaced states with the correspondence
\begin{equation}
	p=(z,\alpha)\vq P=\M(C)^2 \vq \ket{\psi(p)}=\ket{(z,\alpha)}\;.
\end{equation}
If $p_0=(z_0,\alpha_0)\in\M(C)^2$ is an arbitrary value of $p$, after the rotation, the parameter becomes
\begin{equation}
p_\phi=	(z_\phi,\alpha_\phi)= (\E^{\I \phi }z\,\E^{\I \phi}\;,\; \alpha\,\E^{\I \phi})\;.
\end{equation}
The statement can be reformulated also in the phase space as follows. Let $ V(p)=V(z,\alpha)$ be the covariance matrix 
of the squeezed displaced state $\ket{(z,\alpha)}$, then the rotation provides the change
\begin{equation}
 V(z_0,\alpha_0)\quad\to\quad V(z_\phi,\alpha_\phi)=
         S_{\hbf{rot}}(\phi)\; V(z_0,\alpha_0)\;S_{\hbf{rot}}^\TT(\phi) \label{Vrot}
\end{equation}
where $S_{\hbf{rot}}(\phi)$ is the symplectic matrix of the rotation transformation, which is given by \cite{Ma} 
\begin{equation}
	S_{\hbf{rot}}(\phi)= \qmatrix{\cos \phi&-\sin \phi\cr\sin \phi&\cos \phi}
\end{equation}


\subsection{Extension of the GUS to mixed Gaussian states}

First, the definition of GUS can be extended to mixed states as follows. A constellation of $K$ density operators
$\left\{\rho_0,\rho_1,\ldots,\rho_{K-1}\right\}$ has the GUS when the two properties are verified: 1) there exists a unitary
operator $Q$ with the property $Q^{K}=I_{\C(H)}$ and 2) the $K$ density operators $\rho_i$ are obtained from a single reference density operator $\rho_0$ in the following way
\begin{equation} 
   \rho_i=Q^i\rho_0\,(Q^i)^*\vq i=0,1,\ldots,K-1\;. \label{W42}
\end{equation}
This extension is in harmony with the fact that with pure states the density operators become $\rho_i=|\gamma_i\rangle\langle\gamma_i|$. In addition, with the factorization of the density operators, $\rho_i=\gamma_i\gamma_i^*$, relation (\ref{W42})
gives $ \gamma_i=Q^{i} \gamma_0$, which generalizes (\ref{C36}). In the context of optimal decision \cite{Eldar3} the POVMs can be chosen in the form $P_i=\mu_i \mu^*_i$, where the {\it measurement factors} have the symmetry $\mu_i=Q^{i}\mu_0$, $i=0,1,\ldots,K-1$.

In terms of the characterization of Gaussian states, the previous results obtained for pure states cannot be extended straightforwardly to the whole class of mixed Gaussian states. In fact the critical point in the proof of Theorem~\ref{PP2} is represented by the relation $R(\phi)\ket{0_N}=\ket{0_N}$, in which the ground state $\ket{0_N}$ ``absorbs the rotation''. This property does not hold when the ground state is replaced by a general thermal state.

We remind that a general Gaussian channel \cite{HolevoGiovannetti} is completely specified by a triplet $(E, \ell, F)$, 
where $E$ and $F$ are $2N \times 2N$ real matrices and $\ell \in\M(R)^{2N}$. 
A Gaussian channel transforms the mean $\overline{X}$ and the covariance matrix $V$ of an input state $\rho$ in the form
\begin{equation}
    \overline{X}\;\to \;  E^\TT\, \overline{X}+\ell \vq V\;\to\;E^{\TT}\,V\,E + F\;. \label{channelTriplet}
\end{equation}

To get useful results we have to limit the class of mixed state to a suitable subclass of Gaussian states obtained in the following way, which comprises all the cases of interest for the applications.
We suppose that a pure Gaussian state $\ket{\psi(p)}$ is sent through a Gaussian channel specified  by the triplet $(E, \ell, F)$. At the output the noisy state is still Gaussian, but mixed, with a density operator $\rho(\psi(p))$ as in Fig.~\ref{GQ10}.
\begin{figure}[h]
\includegraphics{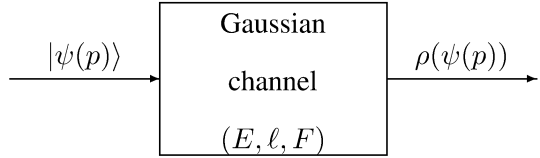}
\vspace*{-2mm}
\caption{A pure Gaussian state $\ket{\psi(p)}$ is sent through a Gaussian channel specified by the triplet $(E, \ell, F)$. The state at the output is still Gaussian, but in general mixed and described by the density operator $\rho(\psi(p))$.}
\label{GQ10} 
\end{figure}
We denote by $\C(S)_{(E, \ell, F)}=\{\rho(\psi), p\in P\}$ this restricted subclass of Gaussian mixed states.

\begin{theorem}\label{PP2ACnew} The class of states $\C(S)_{(E, \ell, F)}=\{\rho(\psi(p))\}$, obtained at the
 output of a Gaussian channel with input pure Gaussian states, is closed under rotations, provided that the matrix
$E$ commutes with the rotation matrix $S_{\hbf{rot}}(\phi)$ and $F$ has the form $f\,I_{2N}$, with $f$ a scalar.
\end{theorem}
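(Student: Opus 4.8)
The plan is to pass to the phase space and to construct, for every parameter $p$, a new parameter $p'\in P$ such that the rotated output state coincides with the channel image of the pure Gaussian state $\ket{\psi(p')}$. Write $(\overline X_p,V_p)$ for the mean and covariance of $\ket{\psi(p)}$. By (\ref{channelTriplet}) the output $\rho(\psi(p))$ has mean $E^\TT\overline X_p+\ell$ and covariance $E^\TT V_p E+F$, while by (\ref{T60}) conjugation by $R(\phi)$ is the symplectic map $\overline X\to S_{\hbf{rot}}(\phi)\,\overline X$, $V\to S_{\hbf{rot}}(\phi)\,V\,S_{\hbf{rot}}^\TT(\phi)$ with zero displacement. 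Hence $R(\phi)\,\rho(\psi(p))\,R^*(\phi)$ is the Gaussian state with mean $S_{\hbf{rot}}(E^\TT\overline X_p+\ell)$ and covariance $S_{\hbf{rot}}(E^\TT V_p E+F)S_{\hbf{rot}}^\TT$, and it suffices to realize this pair in the form $(E^\TT\overline X_{p'}+\ell,\ E^\TT V_{p'}E+F)$ for a suitable pure Gaussian input.

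First I would settle the covariance, which is exactly where the two hypotheses are used. Since $S_{\hbf{rot}}(\phi)$ is orthogonal, $F=f\,I_{2N}$ yields $S_{\hbf{rot}}FS_{\hbf{rot}}^\TT=F$; and $[E,S_{\hbf{rot}}(\phi)]=0$ forces $[E,S_{\hbf{rot}}^\TT(\phi)]=0$ (commuting with an invertible matrix entails commuting with its inverse) and so, transposing, $[E^\TT,S_{\hbf{rot}}(\phi)]=0$. Therefore $S_{\hbf{rot}}\,E^\TT V_pE\,S_{\hbf{rot}}^\TT=E^\TT(S_{\hbf{rot}}V_pS_{\hbf{rot}}^\TT)E$, and by Theorem~\ref{PP2} --- equivalently relation (\ref{Vrot}) --- the matrix $S_{\hbf{rot}}V_pS_{\hbf{rot}}^\TT$ is again the covariance of a pure Gaussian state, namely of the rotated state $R(\phi)\ket{\psi(p)}$, so that $z'=\E^{\I\phi}z\,\E^{\I\phi^\TT}$ is the squeeze part of $p'$. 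With this choice the output covariance is reproduced, and it remains so whatever displacement we later attach to $p'$, since the displacement does not enter the covariance.

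What is left --- and this is the only delicate step --- is to match the mean. Using $[E^\TT,S_{\hbf{rot}}]=0$ the target is $E^\TT S_{\hbf{rot}}\overline X_p+S_{\hbf{rot}}\ell$, so we need $E^\TT(\overline X_{p'}-S_{\hbf{rot}}\overline X_p)=(S_{\hbf{rot}}-I_{2N})\ell$. Because the mean of a pure Gaussian state is a free parameter, fixed through $D(\alpha)$ by the choice of $\alpha$, one simply solves this equation: take $\overline X_{p'}=S_{\hbf{rot}}\overline X_p+(E^\TT)^{-1}(S_{\hbf{rot}}-I_{2N})\ell$ when $E$ is invertible, which covers the channels relevant below, and more generally whenever $(S_{\hbf{rot}}-I_{2N})\ell$ lies in the range of $E^\TT$. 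This determines the displacement parameter $\alpha'$, hence $p'=(z',\alpha')\in P$, and now $R(\phi)\rho(\psi(p))R^*(\phi)$ and $\rho(\psi(p'))$ have the same mean and covariance, hence are equal; thus $\C(S)_{(E,\ell,F)}$ is closed under rotations. In the frequent special case $\ell=0$ the construction collapses to $\overline X_{p'}=S_{\hbf{rot}}\overline X_p$, i.e.\ $\ket{\psi(p')}=R(\phi)\ket{\psi(p)}$: the rotation passes through the channel and one recovers exactly the channel image of the rotated pure input.
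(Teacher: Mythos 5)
Your proof is correct, and on the covariance matrix it is exactly the computation the paper performs: expand $S_{\hbf{rot}}(E^\TT V E+F)S_{\hbf{rot}}^\TT$, pull $E^\TT$ and $E$ through the rotation using the commutation hypothesis, absorb $S_{\hbf{rot}}F S_{\hbf{rot}}^\TT=F$ via orthogonality and $F=fI_{2N}$, and recognize $S_{\hbf{rot}}V(p)S_{\hbf{rot}}^\TT$ as the covariance of a rotated pure input by (\ref{Vrot}). Where you go beyond the paper is the mean vector: the paper's proof states the transformation $\overline X\to E^\TT\overline X+\ell$ but then verifies only the covariance identity, silently assuming that the rotated mean can be realized by an admissible input. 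Your last step supplies exactly that missing verification, by solving $E^\TT\overline X_{p'}+\ell=S_{\hbf{rot}}(E^\TT\overline X_p+\ell)$ for the free displacement parameter, and in doing so you correctly expose a tacit hypothesis: one needs $(S_{\hbf{rot}}-I_{2N})\ell$ to lie in the range of $E^\TT$ (automatic when $E$ is invertible, as in all the channels listed after the theorem, or when $\ell=0$). So your argument is the same in spirit but strictly more complete; the only thing to add, if you want to match the theorem as literally stated, is a remark that without such a condition on $(E,\ell)$ the closure of the class can fail at the level of the first moments even though the covariances match.
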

\begin{proof} 
The mean vector is modified as $\overline{X}\;\to \;  E^\TT\, \overline{X}+\ell$, while the covariance matrix is modified as 
\begin{equation}
    V\;\to\;E^{\TT}\,V\,E + F\;.
\end{equation}

Consider a generic generic pure state $\ket{\psi(p_0)}$ in the class $\C(S)$ with covariance $V(p_0)$.
Let $V(p_\phi)$ be the covariance matrix after the rotation $\phi$ in the class $\C(S)$, obtained according to (\ref{Vrot}). Then
\begin{equation}
	S_{\hbf{rot}}(\phi)(E^\TT V(p_0)E+F)S_{\hbf{rot}}(\phi)^\TT=E^\TT V(p_\phi)E+F\;.
\end{equation}
In fact
\begin{eqnarray}
  S_{\hbf{rot}}(\phi)(E^\TT V(p_0)E+F)S_{\hbf{rot}}(\phi)^\TT &=&
  S_{\hbf{rot}}(\phi)E^\TT V(p_0)E S_{\hbf{rot}}(\phi)^\TT+ S_{\hbf{rot}}(\phi) F S_{\hbf{rot}}(\phi)^\TT\nonumber\\
  &=& E^\TT S_{\hbf{rot}}(\phi) V(p_0) S_{\hbf{rot}}(\phi)^\TT E + f S_{\hbf{rot}}(\phi) I_{2N} S_{\hbf{rot}}(\phi)^\TT
\end{eqnarray}
 where $S_{\hbf{rot}}(\phi)$ verifies the condition $S_{\hbf{rot}}(\phi)S_{\hbf{rot}}^\TT(\phi)=I_{2N}$. Hence the conclusion.
\end{proof}

This model includes the most relevant cases \cite{HolevoGiovannetti} \cite{HolevoChannels}, such as:
\begin{enumerate}
\item The {\it classical noise channel}, which merely adds classical Gaussian noise to a quantum state, i.e., $E = I_{2N}$, $f \geq 0$.

\item The {\it lossy (or attenuation) channel} in which $E=\sqrt{\eta}I_{2N}$ and $F=(1-\eta)\,I_{2N}$, with $\eta<1$ so that $V\;\to\; \eta\,V + (1-\eta) I_{2N}$. This is the model for example for the propagation along an optical fiber, where each photon is lost with probability $(1 - \eta)$. 

\item The {\it amplification channel} in which $E=\sqrt{\eta}I_{2N}$ and $F=(\eta-1)\,I_{2N}$, with the gain $\eta>1$, so that $V\;\to\; \eta\,V + (\eta-1) I_{2N}$. 

\item The {\it thermal noise channel} called sometimes also {\it attenuation channel} \cite{HolevoGiovannetti}, with $E=\sqrt{\eta}I_{2N}$ and $F=(1-\eta)\sigma^2\,I_{2N}$, with $\eta<1$ and $\sigma^2\, I_{2N}$, $\sigma^2\geq 1$, is the covariance matrix of a thermal state with average photon number $\C(N)=(\sigma^2 - 1)/2$.
\end{enumerate}
{\bf Remark} Note that in these channels the assumption is that the parameters are the same for all the $N$ modes. In particular, for the thermal noise channel, in all the modes the average number of thermal photons is considered the same. In general, denoting by $\sigma^2_{k}$ the thermal contribution in the $k$-mode, the matrix $F$ in the covariance relation should be modified as
  $$
    F=(1-\eta)\,\bigoplus_{k=1}^N\sigma^2_{k}\;I_{2}\;.
  $$
However, the assumption $\sigma^2_k=\sigma^2$ is acceptable for PPM or other modulations in quantum communications.

\section{Examples of Applications}\label{Applications}

In this section we recall the quantum detection based on the square-root measurements (SRM) in general and then in the presence of GUS. Finally we give an explicit application to PPM.

\subsection{The SRM in general}
         
In the case of pure states, the measurement vectors $| \mu_i \rangle$ are chosen with the criterion of making the differences 
between the states and the measurement vectors, $| e_i \rangle=|\gamma_i\rangle-| \mu_i \rangle$, as small as possible and we look for {\bf the measurement vectors $| \mu_i \rangle$ which minimize the quadratic error} \cite{Forney1}
\begin{equation}
\C(E) = \sum_{i=0}^{K-1} \langle e_i | e_i \rangle =
         \sum_{i=0}^{K-1} (\langle \gamma_i|-\langle\mu_i|) 
	    (|\gamma_i\rangle-\mu_i \rangle) 
\end{equation}                                                        
with the constraint of the resolution of the identity $\sum_{i=0}^{K-1}|\mu_i\rangle\langle\mu_i|=I_\C(H)$.

The evaluation of the measurement vector  is obtained computing the inner products $\langle\gamma_i|\gamma_j\rangle$ between the states of the constellation, thus obtaining the {\it Gram's matrix} 
\begin{equation}
\sottovar(2pt){G}_{K\times K}=\Gamma^*\,\Gamma=\left[\bk(\gamma_{i},\gamma_{j})  \right]\;,\qquad i,j=0, \ldots, K-1
\label{C20X}
\end{equation}
Then we evaluate, by eigendecomposition, the square root and the inverse square root $G^{\pm 1/2}$.
The measurement vectors are given explicitly by 
\begin{equation}
    |\mu_i\rangle =\sum_{j=0}^{K-1} {(G^{-1/2})}_{ij} |\gamma_j\rangle\,. \label{C27}
\end{equation}
The transition probabilities result in $p_c(j|\;i)=\left|({G}^{\frac{1}{2}})_{ij}\right|^2$, from which we obtain the correct decision probability
\begin{equation}
	P_c=\frac{1}{K}\sum_{i=0}^{K-1} \left|({G}^{\frac{1}{2}})_{ii}\right|^2\;.  \label{C29}
\end{equation}
The advantage of the SRM method is that it gives explicit results for any constellation, that is, for any modulation format.

\subsection{The SRM in the presence of GUS }

When the constellation has the GUS, the Gram's matrix becomes {\it circulant}. In fact, the inner products result in
$G_{ij}=\langle\gamma_i|\gamma_j\rangle=\langle\gamma_0|(Q^*)^iQ^j|\gamma_0\rangle = \langle\gamma_0|Q^{j-i}|\gamma_0\rangle$
and depend upon the difference $i-j\;(\hb{mod }K)$. This property provides two advantages: 1) the performance evaluation becomes
easier with the technique of the discrete Fourier transform (DFT), and 2) the corresponding quantum detection becomes optimal \cite{Forney1} \cite{Assa}. 

The eigendecomposition of the circulant Gram matrix is simply given by $G=W^*\Ld W$, where $\Ld=\diag\;[\lambda_0,\lambda_1,\ldots,\lambda_{K-1}]$ collects the eigenvalues and $W$ is the $K\times K$ DFT matrix
$W=K^{-\met}\left[W_K^{-rs}\right]$, $r, s=0,\ldots K-1$, with $W_K:=\E^{\I 2\pi/K}$. Moreover, the eigenvalues are given by the DFT of the first row of $G$
\begin{equation}
     \lambda_p=\sum_{q=0}^{K-1}G_{0q}W_K^{-pq}\;.\label{lambda_p}
\end{equation}
The square roots of $G$ are simply obtained as $G^{\pm\frac12}=W^*\Ld^{\pm\frac12}W$ and the transition probabilities are
\begin{equation}
	p_c(j|i)=\Bigl|\frac1K\sum_{p=0}^{K-1}\lambda_p^{\frac12}W_K^{-p(i-j)}\Bigr|^2\vq i,j=0,1,\ldots,K-1\;. \label{CZ24}
\end{equation}
In particular, the diagonal transition probabilities are all equal $p_c(i|i)=\left[\frac1K\sum_{p=0}^{K-1}\lambda_p^{\frac12}\right]^2$, independent of $i$, and the correct decision probability (\ref{C29}) becomes explicitly
\begin{equation}
	P_c=\left[\frac1K\sum_{p=0}^{K-1}\lambda_p^{\frac12}\right]^2\;. \label{CZ26}
\end{equation}
It is possible to obtain the explicit expression of the measurement vectors $\ket{\mu_i}$, given by $|\mu_0\rangle=\sum_{j=0}^{K-1}{(G^{-1/2})}_{ij}|\gamma_j\rangle$.

\subsection{Extension of the SRM to mixed Gaussian states}

The SRM method can be extended to  mixed states. The preliminary step is the factorization of the density operators $\rho_i=\gamma_i\gamma_i^*$ and it can be shown \cite{Eldar3} that the measurement operators can be factored as the density operators, $P_i=\mu_i \mu_i^*$, where the rank of $\mu_i$ is the same of $\gamma_i$.

In this case the error is considered between the state factors and the measurement factors, $e_i=\gamma_i-\mu_i$. 
From the factors $\gamma_i$, we first form the Gram matrix $G=[\gamma_i^*\gamma_j]_{i,j=0,1,\ldots,K-1}$,
then from the square  roots $G^{\pm 1/2}$ we can obtain both the measurement factors $\mu_i$ and the transition probabilities in a similar way as for the pure states. 

The SRM method always leads to explicit results and, in general, provides a good overestimation of the error probability, also compared with other sub-optimal methods such as the Chernoff bound \cite{Corvaja}.
 
\subsection{Application to pulse position modulation (PPM)}

Pulse position modulation (PPM) is widely adopted in free space optical transmission, and is a candidate for deep-space
 transmission, also in quantum form. Here we evaluate the error probability in $K$--ary quantum optical PPM systems, considering the most general Gaussian states.
 
In the quantum PPM the modulation format and the states belong to a composite Hilbert space, given by the tensor product 
 $\C(H)=\C(H)_0\otimes\C(H)_0\otimes\cdots\otimes\C(H)_0$ of $K$ equal Hilbert spaces $\C(H)_0$ \cite{Yuen}\cite{CarPierPPM},
where $ \C(H)_0$ has dimension $n$ and  $\C(H)$ has dimension $N=n^K$
\begin{equation}
	|\gamma_i\rangle=|\gamma_{i,0}\rangle\otimes|\gamma_{i,1}\rangle \otimes \cdots \otimes|\gamma_{i,K-1}\rangle \vq i=0,1,\ldots K-1\;. \label{W4}
\end{equation}
Considering Gaussian states, with the most general squeezed-displaced states, symbolized by $\ket{z,\alpha}$, the natural choice for PPM is to associate the symbol 0 to the ground state $\ket{\gamma_{ik}}=\ket{0,0}$ and to the symbol 1 the generic state $\ket{\gamma_{ik}}=\ket{z,\alpha}$.
With this choice (\ref{W4}) represents a constellation of $K$-mode Gaussian states.
For instance for $K=3$ we have explicitly
\begin{eqnarray}
|\gamma_0\rangle&=&\ket{z,\alpha}\otimes\ket{0,0}\otimes\ket{0,0}\nonumber \\
|\gamma_1\rangle&=&\ket{0,0}\otimes\ket{z,\alpha}\otimes\ket{0,0}\\ \label{W5}
|\gamma_2\rangle&=&\ket{0,0}\otimes\ket{0,0}\otimes\ket{z,\alpha}\nonumber \;.
\end{eqnarray}
Note that, without loss of generality we can choose a real displaced parameter $\alpha$, while we let a generic complex squeezing factor $z=r\,e^{i \theta}$.

The application of GUS to PPM is not trivial because the states are multimode. The symmetry operator $Q$ is given by \cite{CarPierPPM}--\cite{Henderson}
\begin{equation}
  Q=\sum_{k=0}^{n-1}w_n(k)\otimes I_{N'}\otimes w^\TT_n(k)\;,\qquad N'= n^{K-1}\;,\label{P2A}
\end{equation}
where $\otimes$ is the Kronecker's product, $w_n(k)$ is a column vector of length $n$, with null elements except for one unitary element at position $k$ and $I_{N'}$ is the $N'\times N'$ identity matrix. Then $Q$ has dimension $N=n^K$ and the property $Q^K=I_N$.
  
Now, it is not immediate to see that $Q$ is a rotation operator, that is, of the form $R(\phi)=\E^{\I\,\phi}$, with $\phi$ an $N\times N$ Hermitian matrix. To find the ``phase'' $\phi$ we use the EID of $S$ written in the form
$ Q=\sum_{m=0}^{K-1} \lambda_m P_m$   
where $\lambda_m=\E^{\I\,2\pi m/K}:=W_K^m$ are the $K$ distinct eigenvalues and $P_m$ are $K$ orthogonal projectors.
In this EID the eigenvalues are known, while the projectors should be evaluated from the the expression (\ref{P2A}), which defines a complicate permutation matrix. The alternative is the evaluation through the powers of $Q$, $Q^k=\sum_{m=0}^{K-1} W_K^{mk} P_m$.
According to this relation, $[Q^0,Q^1,\ldots,Q^{K-1}]$ turns out to be the DFT of  $[P_0,P_1,\ldots,P_{K-1}]$. Thus, taking the inverse DFT one gets $P_m=\frac1K\sum_{k=0}^{K-1} W_K^{-mk} Q^k$, which is easy to evaluate. Next we recall that $Q$ is unitary and therefore it can be written in the form $Q=\E^{\I\,\phi}$, where $\phi$ is a Hermitian matrix. Then, by comparison with the EID of $Q$ we find that the EID of $\phi$ is given by
\begin{equation}
  \phi=\sum_{m=0}^{K-1} \frac{2\pi m}{K} P_m\;, \label{HK20}
\end{equation}
where the eigenvalues become $2\pi\, m/K$ and the projectors are the same as in the EID of $Q$.
 
\noindent {\bf Example} We give an example with $K=3$ and $n=2$, where the matrices are $8\times 8$. The symmetry
operator is
\begin{equation}
	Q=\left[\begin{array}{cccccccc}
		 1 & 0 & 0 & 0 & 0 & 0 & 0 & 0 \\
		 0 & 0 & 0 & 0 & 1 & 0 & 0 & 0 \\
		 0 & 1 & 0 & 0 & 0 & 0 & 0 & 0 \\
		 0 & 0 & 0 & 0 & 0 & 1 & 0 & 0 \\
		 0 & 0 & 1 & 0 & 0 & 0 & 0 & 0 \\
		 0 & 0 & 0 & 0 & 0 & 0 & 1 & 0 \\
		 0 & 0 & 0 & 1 & 0 & 0 & 0 & 0 \\
		 0 & 0 & 0 & 0 & 0 & 0 & 0 & 1 
              \end{array}\right]
\label{QPPM}
\end{equation}
The projectors $[P_0, P_1, P_2]$ are obtained by the DFT of $[Q^0,Q^1,Q^2]$ and finally we have the phase matrix $\phi$ from (\ref{HK20}). It reads
\begin{small}
\begin{equation}
\phi=\frac{3}{\pi}\left[
\begin{array}{cccccccc}
0 & 0 & 0 & 0 & 0 & 0 & 0 & 0 \\
0 & 2 & -1-\frac{i}{\sqrt{3}} & 0 & -1+\frac{i}{\sqrt{3}} & 0 & 0 & 0 \\
0 & -1+\frac{i}{\sqrt{3}} & 2 & 0 & -1-\frac{i}{\sqrt{3}} & 0 & 0 & 0 \\
0 & 0 & 0 & 2 & 0 & -1+\frac{i}{\sqrt{3}} & -1-\frac{i}{\sqrt{3}} & 0 \\
0 & -1-\frac{i}{\sqrt{3}} & -1+\frac{i}{\sqrt{3}} & 0 & 2 & 0 & 0 & 0 \\
0 & 0 & 0 & -1-\frac{i}{\sqrt{3}} & 0 & 2 & -1+\frac{i}{\sqrt{3}} & 0 \\
0 & 0 & 0 & -1+\frac{i}{\sqrt{3}} & 0 & -1-\frac{i}{\sqrt{3}} & 2 & 0 \\
0 & 0 & 0 & 0 & 0 & 0 & 0 & 0
\end{array}
\right]\label{PhiPPM}
\end{equation}
\end{small}
We can verify (e.g. with {\tt Mathematica}) that $\E^{\I\,\phi}=Q$ and that $\E^{\I\,K\,\phi}=I_N$.

\subsection{Statistics of the quantum states in PPM}

To evaluate the error probability of QC systems with PPM we need the following statistics of the single--mode state $\ket{z,\alpha}$:
\begin{itemize}
\item The mean photon number, which is given by \cite{Yuen2}
\begin{equation}
  \bar N_{|z,\alpha\rangle}=|\alpha|^2+\sinh^2(r) \label{U22A}
\end{equation}
Then all the $K$ PPM symbols have the same {\it mean number of photons per symbol}, given by
\begin{equation}
	N_s=|\alpha|^2+\sinh^2(r)\;. \label{R2}
\end{equation}
\item The inner product between two states was evaluated by Yuen \cite{Yuen2} and reads
\begin{equation}
 \bk({z_1,\alpha_1},{z_0,\alpha_0})= A^{-\met}\;\exp\left[-\frac{A\left(|\beta_1|^2+|\beta_0|^2\right)-
              2\beta_1\beta_0^*+B\,\beta_1^{*2}-B^*\beta_0^2}{2A}\right]
	    \label{YY2}  
\end{equation}
where  $\mu_i=\cosh(r_i)$, $\nu_i=\sinh(r_i)\E^{\I\theta_i}$, $\beta_i=\mu_i\alpha_i-\nu_i\alpha_i^*$,
$A=\mu_0\mu_1^*-\nu_0\nu_1^*$, $B=\nu_0\mu_1-\mu_0\nu_1$.
\end{itemize}

\subsection{Error probability in the quantum PPM}

The analysis of a quantum PPM  system (limited to coherent states) has been done  in a famous article by Yuen, Kennedy and Lax \cite{Yuen} who found the optimal elementary projectors using an algebraic method developed ``ad hoc'' for this kind of modulation. In \cite{CarPierPPM} we proposed an original method based on the SRM which given the minimum error probability for the GUS of the quantum PPM.
Also in \cite{CarPierPPM} the analysis was limited to coherent states. Here we extend the evaluation to general Gaussian states.

In the SRM  the error probability depends only on the inner product between the single--mode states $\ket{z,\alpha}$ and $\ket{0,0}$. In fact the Gram matrix is given by
\begin{equation}
	G=\qmatrix{1 & \Gamma & \ldots & \Gamma\cr
                    \Gamma & 1 &  \ldots & \Gamma\cr
                     \vdots &  & \ddots & \cr
                     \Gamma & \Gamma & \ldots & 1\cr}\label{Gram1}
\end{equation}
where $\Gamma:=|\bk({z,\alpha},{0,0})|^2$.  

\subsubsection{About the inner product}

The squared inner product can be written in the form
\begin{equation}
   \Gamma=\frac{1}{\cosh r}\exp[-\alpha^2\;f(r,\theta)]\label{GG1}
\end{equation}
where
\begin{equation}
  f(r,\theta)= \cosh(2r)+ \tanh(r)\sinh(2r) - \left[\sinh(2r)+\tanh(r)\cosh(2r)\right]\,\cos\theta\ \;.\label{FR2}
\end{equation}
Clearly, for $r$ and $\alpha$ given, $\Gamma$ has a minimum for $\theta=\pi$, as shown in Fig.~\ref{GQ17}.
\begin{figure}[h]
\includegraphics[width=\textwidth]{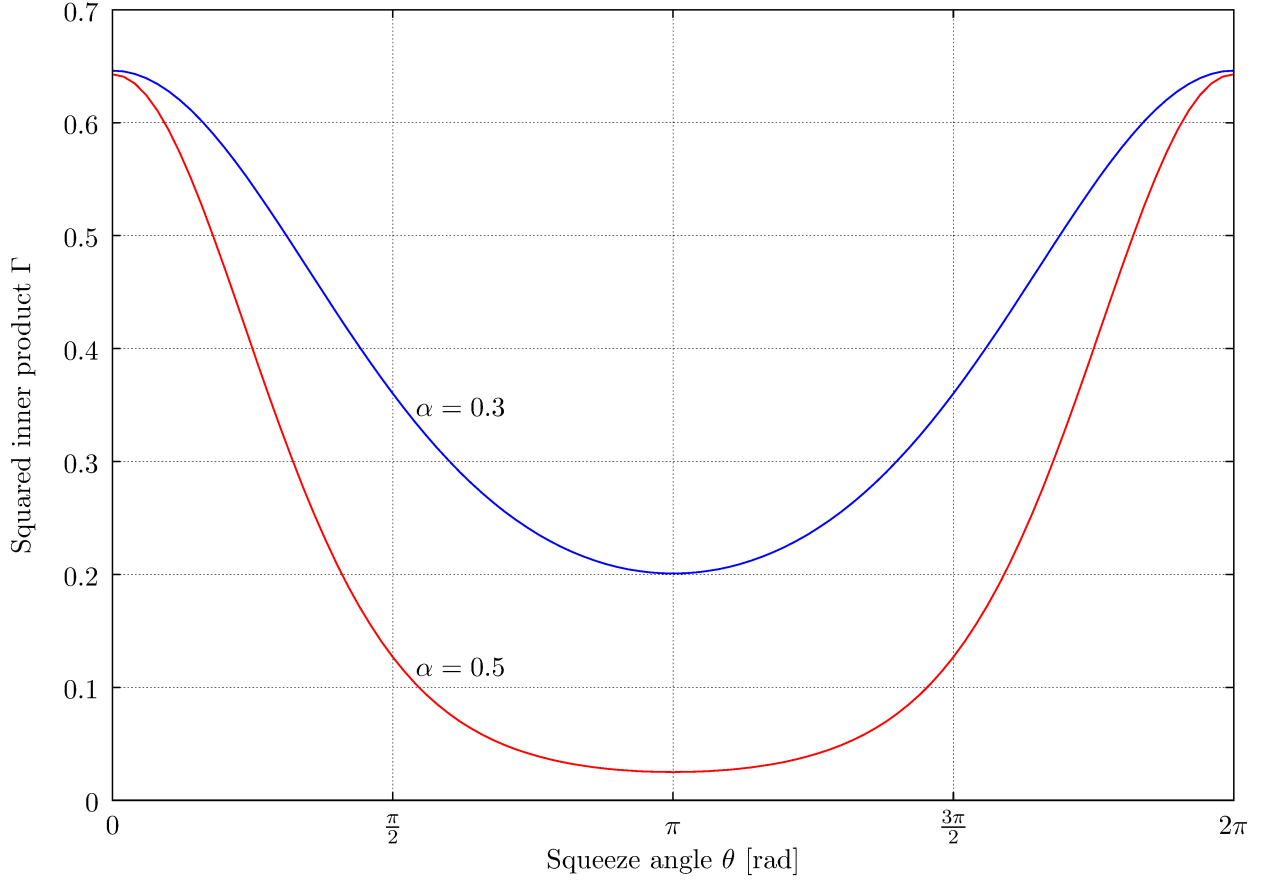}
\caption{Squared inner product $\Gamma=|\langle r\;e^{i\theta},\alpha|0,0\rangle|^2$ versus the phase $\theta$ for $r=1.0$ and two values of the displacement.}\label{GQ17}
\end{figure}
The interpretation in the phase space may be the following. In the phase space, thinking to the Wigner function, an inner product as
$\bk({r\,e^{i\theta},\alpha},{0,0})$ depends on the distance between the two states, which is provided by the displacement $\alpha$, but also on the ``orientation'' of the squeezing, which is determined by the phase $\theta$. We know that in a squeezed displaced state $\ket{r\,e^{i\theta},\alpha}$ the noise variances are different and pictorially this difference can be represented by a tilted ellipse, as shown in Fig.~\ref{GQ14}.
\begin{figure}[h]
\includegraphics[width=\textwidth]{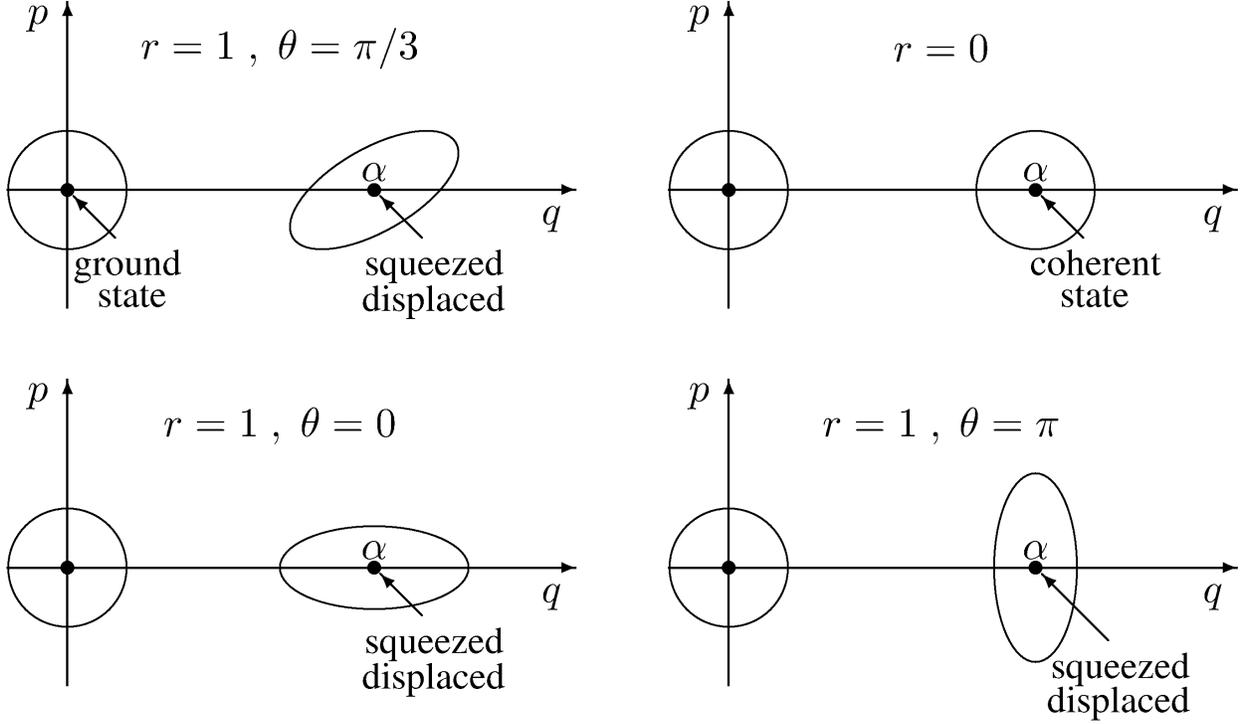}
\caption{Representation of the states $\ket{z,\alpha}$ and $\ket{0,0}$ in the phase space for different
values of $z=r\,e^{i\theta}$. Pictorially the noise variances 
are represented by a tilted ``error ellipse''.}\label{GQ14}
\end{figure}
The ellipse degenerates into a circle in the case of a coherent state (and for the ground state). 
With the objective to minimize the error probability, the minimum value of $\Gamma$ is sought. Form the figure we can easily realize that a squeezing factor with $\theta = 0$ gives a worst error probability than the use of a coherent state, while the best performance is achieved with $\theta = \pi$.

The error probability computed from (\ref{C29}) becomes \cite{CarPierPPM}
\begin{equation}
  P_{e} = 1-\frac{1}{K^2}\left(\sqrt{1+(K-1)\Gamma}+(K-1)\sqrt{1-\Gamma}\right)^2 \label{W30}
\end{equation}
in perfect agreement with the results of \cite{Yuen}. Also in this case $P_e$ can be expressed in terms of the mean photon number per symbol $N_s$, by writing $\Gamma$ as
\begin{equation}
  \Gamma=\frac{1}{\cosh r}\exp\left[-\left(N_s-\sinh^2 r\right)\;f(r,\theta)\right]\vq N_s\ge \sinh^2 r\;.\label{Gamma1}
\end{equation}
In the representation of the error probability it is convenient to consider as a variable the {\it average number of photons per bit} $N_R=N_s/\log_2 K$. We see in the expression of $N_s$ (\ref{R2}) that a contribution comes from the displacement and one form the squeezing. If we fix a value of $r$, the minimum of $N_s$ becomes $\sinh^2(r)$ and for $N_s<\sinh^2(r)$ there is no room for the displacement.

In Fig.~\ref{8PPMquantum} we present the error probability as a function of $N_R$ for coherent states and other three values of the 
squeezing factor.  
\begin{figure}
\includegraphics[width=\textwidth]{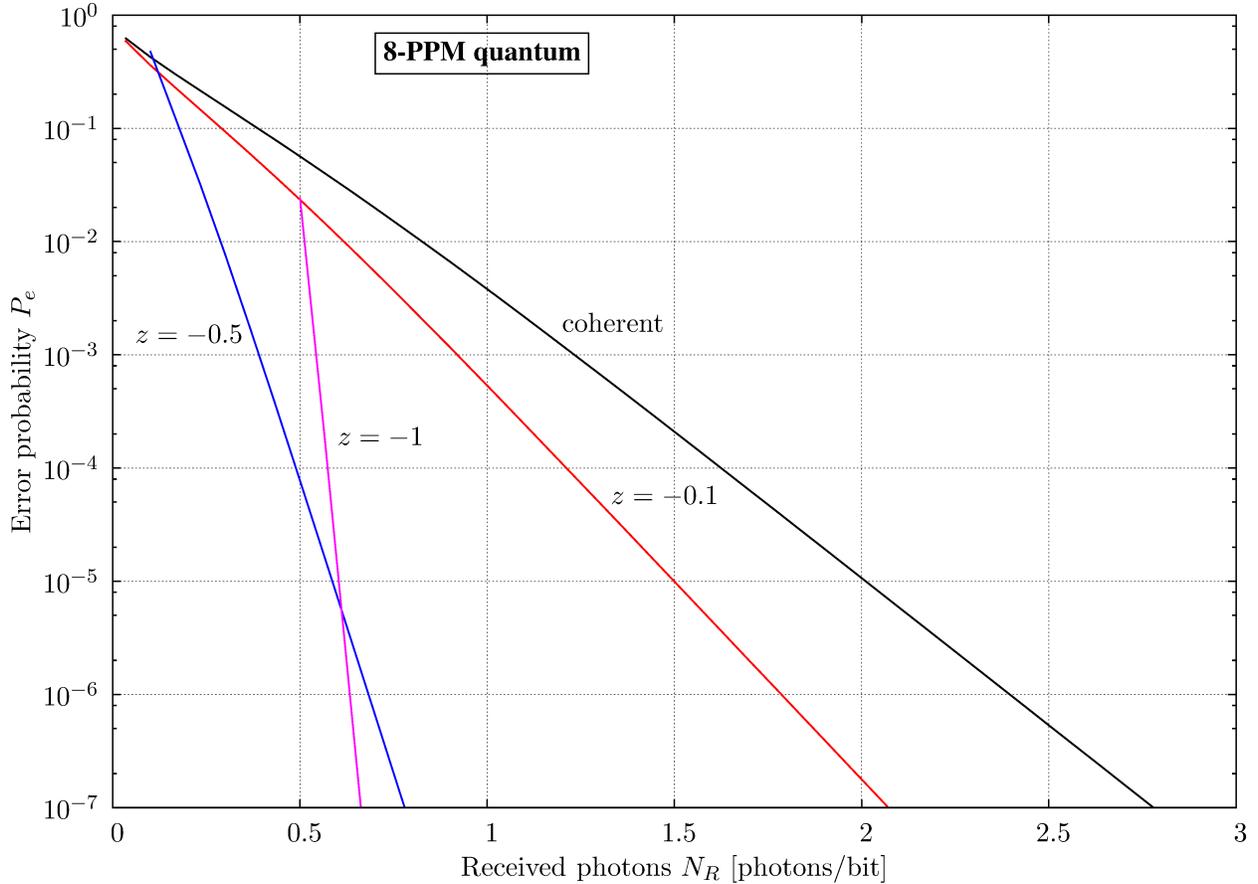}
\caption{Error probability of quantum 8--PPM for coherent states and three values of squeezing. For $z=-0.1$ we must have $N_R\ge 3.3 \cdot 10^{-3}$, for $z=-0.5$, $N_R\ge0.09$ and for $z=-1$, $N_R\ge0.46$. }\label{8PPMquantum}
\end{figure}
Note that, by properly choosing the value of $z$, one can reduce dramatically the error probability with respect to the use of coherent states. 

\section{Conclusions}\label{Conclusions}

We have seen that GUS plays a fundamental role in quantum communications, due to the optimality of measurement operators obtained by the SRM in the quantum discrimination. 
The considerations on Gaussian states and their invariance properties with respect to unitary transformations and in particular rotations allow one to construct constellations of Gaussian states having the GUS, for example coherent states for their use in quantum optical communications. Moreover, the transmission of such states through and additive-noise channel preserves the GUS.
The theory of the GUS applied to the most general Gaussian states extends the analysis of the performance of a QC system employing PPM (or other modulations with GUS) to the most general case, not limiting the evaluation to the case of coherent states.

\begin{acknowledgments}
This work has been supported in part by the Project ``Q-FUTURE'' (prot.~STPD08ZXSJ) of the University of Padova.
\end{acknowledgments}

\end{document}